\documentclass[pra,twocolumn,superscriptaddress,showpacs]{revtex4-1}

\usepackage[commandnameprefix=always]{changes}
\usepackage{latexsym}
\usepackage{graphicx}
\usepackage[colorlinks=true, citecolor=blue, urlcolor=blue]{hyperref}
\usepackage{float}
\usepackage{textcomp}
\usepackage{mathpazo}
\usepackage{comment}
\usepackage{lipsum}

\usepackage{exscale}
\usepackage{bbm}
\usepackage{latexsym}
\usepackage[T1]{fontenc}
\usepackage{enumerate}
\usepackage{bbold}
\usepackage{color}
\usepackage[colorlinks=true,citecolor=blue,urlcolor=blue]{hyperref}
\usepackage{mathtools}
\usepackage{amsfonts,amsmath,amssymb,amsthm}

\linespread{1.03}
\usepackage[normalem]{ulem}

\usepackage{xr}
\makeatletter
\newcommand*{\addFileDependency}[1]{
  \typeout{(#1)}
  \@addtofilelist{#1}
  \IfFileExists{#1}{}{\typeout{No file #1.}}
}
\makeatother

\newcommand*{\myexternaldocument}[1]{
    \externaldocument{#1}
    \addFileDependency{#1.tex}
    \addFileDependency{#1.aux}
}

\myexternaldocument{appendix}



\usepackage{bbm}

\usepackage[colorlinks=true,citecolor=blue,urlcolor=blue]{hyperref}
\sloppy

\usepackage[draft]{fixme}
\usepackage{amsmath,bbm}
\usepackage{graphicx}
\usepackage{amsfonts}
\usepackage{amssymb}
\usepackage{amsmath, amssymb, amsthm,verbatim,graphicx,bbm}
\usepackage{mathrsfs}
\usepackage{color,xcolor,longtable}
\usepackage{changes}


\newcommand{\beq}[0]{\begin{equation}}
\newcommand{\eeq}[0]{\end{equation}}

\newcommand{\one}{\leavevmode\hbox{\small1\normalsize\kern-.33em1}}

\setlength{\parskip}{0em}

\def\be{\begin{equation}}
\def\ee{\end{equation}}
\def\ben{\begin{eqnarray}}
\def\een{\end{eqnarray}}
\def\eea{\end{array}}
\def\bea{\begin{array}}

\newcommand{\Tr}[1]{\mathrm{Tr}#1}
\newcommand{\bei}{\begin{itemize}}
\newcommand{\eei}{\end{itemize}}
\newcommand{\ket}[1]{\left|#1\right\rangle}
\newcommand{\bra}[1]{\left\langle#1\right|}

\newcommand{\proj}[1]{\ket{#1}\!\bra{#1}}

\newcommand{\I}{\mathbbm{1}}

\newcommand{\im}{\mathbbm{i}}

\renewcommand{\emph}[1]{\textbf{#1}}


\makeatletter
\newtheorem*{rep@theorem}{\rep@title}
\newcommand{\newreptheorem}[2]{%
\newenvironment{rep#1}[1]{%
 \def\rep@title{#2 \ref{##1}}%
 \begin{rep@theorem}}%
 {\end{rep@theorem}}}
\makeatother

\theoremstyle{plain}
\newtheorem{thm}{Theorem}
\newtheorem*{thm*}{Theorem}
\newreptheorem{thm}{Theorem}

\newtheorem{defn}{Definition}

\theoremstyle{definition}

\theoremstyle{remark}


\usepackage[T1]{fontenc}


\begin{document}
\title{Certification of the maximally entangled state using non-projective measurements}

\author{Shubhayan Sarkar}
\email{sarkar@cft.edu.pl}
\affiliation{Center for Theoretical Physics, Polish Academy of Sciences, Aleja Lotnik\'{o}w 32/46, 02-668 Warsaw, Poland}

\begin{abstract}	
In recent times, device-independent certification of quantum states has been one of the intensively studied areas in quantum information. However, all such schemes utilise projective measurements which are practically difficult to generate.
In this work, we consider the one-sided device-independent (1SDI) scenario and propose a self-testing scheme for the two-qubit maximally entangled state using  non-projective measurements, in particular, three three-outcome extremal POVM's. We also analyse the robustness of our scheme against white noise.
\end{abstract}

\maketitle
\textit{Introduction.---} The existence of nonlocal correlations, as was first realised by Einstein, Podolski and Rosen in 1935 \cite{EPR} as a paradox and then subsequently by Schr$\mathrm{\ddot{o}}$dinger in the same year \cite{Schrod}, is one of the most intriguing features of quantum theory. Consequently, Bell \cite{Bell, Bell66} proposed a mathematical formulation to detect whether quantum theory is inherently nonlocal or not, and thus it is commonly referred to as Bell nonlocality. Apart from its relevance in the foundations of physics, Bell nonlocality has given rise to an enormous number of applications in computation, communication and information theory \cite{NonlocalityReview}. 


A recent well profound application of nonlocality is device-independent (DI) certification where assuming quantum theory and some other physically well-motivated assumptions, the statistics obtained from a black box are enough to validate the underlying mechanism inside it. The strongest DI certification is referred to as self-testing. First introduced in \cite{Yao}, self-testing allows one to certify the underlying quantum states and the measurements, upto some freedom based on the maximal violation of a Bell inequality \cite{SupicReview}. In recent times, there  has been increased interest to find protocols to self-test various quantum systems due to their applicability in various quantum information tasks. Despite the progress in designing schemes to self-test various quantum states using projective measurements, for instance, Refs. \cite{Scarani, Bamps, All, chainedBell, Projection,  sarkar, prakash, Jed1, sarkar4, sarkar5, Allst1}, there is no scheme that utilises non-projective measurements. 

The precision to experimentally generate sets of projective measurements, which are pre-requisite to observe any form of non-locality, reduces as the dimension of the system grows [for instance see \cite{exp1}]. Thus, a natural question arises whether noisy measurements or non-projective measurements can also be used to self-test quantum states. Further on to self-test any state or measurement, one needs to observe the maximal violation of an inequality which touches the set of quantum correlations, or simply the quantum set, at a particular point. It is also an open question in quantum foundations whether a point on the boundary of the quantum set in some scenarios can be saturated by only non-projective measurements. 

Self-testing quantum states using non-projective measurements is not possible in the standard Bell scenario. The reason being that the maximal violation of Bell inequalities 
can always be achieved by projective measurements. Consequently, we consider another form of nonlocality, known as quantum steering  \cite{Wiseman, Ecaval, Quin}. To witness quantum steering, one needs to consider the Bell scenario with an additional assumption that one of the parties is trusted. In the DI regime, this is referred to as one-sided device-independent (1SDI) scenario. Certification of quantum states and measurements in 1SDI scenario has gained recent interest \cite{Supic, Alex, Goswami, Bharti, Chen, sarkar6, sarkar3} as they are more robust to noise and require detectors with lower efficiencies when compared to fully DI scenarios \cite{Crypto1, Crypto2}.   

In this work, we provide the first scheme to certify the two-qubit maximally entangled state 
\begin{eqnarray}
\ket{\phi^+}=\frac{1}{\sqrt{2}}(\ket{00}+\ket{11})
\end{eqnarray}
using three three-outcome non-projective extremal measurements in the 1SDI scenario. For this purpose, we first construct a steering inequality with two parties such that each of them chooses three inputs and gets three outputs. We then use the maximal violation of this steering inequality to obtain the self-testing result. We finally show that our scheme is highly robust when the states and measurements are mixed with white noise.

\textit{Preliminaries---}Before proceeding to the results, let us first describe the scenario and notions used throughout this work. 

{\it{Extremal positive operator valued measure (POVM).}} Any measurement in quantum theory, usually referred to as a POVM, is represented as $M=\{M^a\}$ where $M^a$ are the measurement elements corresponding to the $a$-th outcome of $M$. These elements are positive semi-definite operators and $\sum_aM^a=\I$. Now, a POVM that can not be expressed as a convex combination of other POVM's is defined as an extremal POVM. As shown in \cite{peri} the elements $M^a$ of any rank-one extremal POVM can be expressed as $M^a=\lambda_a\proj{\nu_a}$ where $\lambda_a\geq 0$ and the elements are linearly independent.

\begin{figure}[t]
\includegraphics[width=\linewidth]{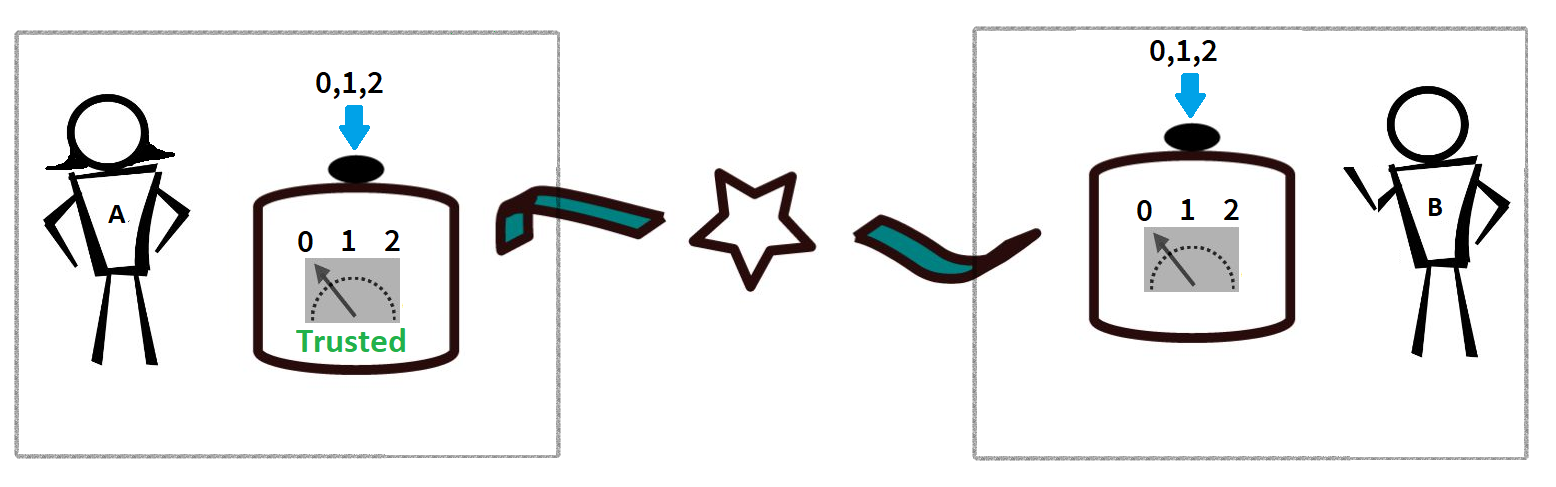}
    \caption{Quantum steering scenario. Alice and Bob are spatially separated and each of them receive a subsystem on which they perform three three-outcome measurements. They are not allowed to communicate during the experiment. Once it is complete, they construct the joint probability distribution $\{p(a,b|x,y)\}$.}
    \label{fig1}
\end{figure}

\textit{Quantum steering scenario.} In this work, we consider a simple scenario to witness quantum steering, consisting of two spatially separated parties namely, Alice and Bob. They locally perform measurements on their respective subsystems which they receive from a preparation device. Bob can choose among three measurements denoted by $B_y$ such that $y=0,1,2$ each of which results in three outcomes labelled by $b=0,1,2$. The measurement performed by Bob might affect the received subsystem with Alice which is denoted as $\sigma_b^y\in \mathcal{H}_A$ where $\sigma_b^y$ are positive semi-definite operators. The collection of these operators $\sigma=\{\sigma_b^y\ s.t.\ b=0,1,2,\ y=0,1,2\}$ is called an assemblage. 


In quantum theory the operators $\sigma_b^y$ are expressed for any $y,b$ as
\begin{eqnarray}
\sigma_b^y=\Tr_B\left[(\I_A\otimes N_y^b)\rho_{AB}\right]
\end{eqnarray}
where $\rho_{AB}\in\mathcal{H}_A\otimes\mathcal{H}_B$ is the state shared between Alice and Bob and $B_y=\{N_y^b\}$ denote Bob's measurements. 
Alice is trusted here, which means that her measurements are known or she can perform tomography on her subsystem. If the shared state is not steerable, then the assemblage has a local hidden state (LHS) model \cite{Wiseman}  defined as,

\begin{eqnarray}\label{LHS}
\sigma_b^y=\sum_\lambda p(\lambda) p_\lambda(b|y)\rho_\lambda
\end{eqnarray}
where $\sum_{\lambda}p(\lambda)=1$, $p_\lambda(b|y)$ are the probability distributions over $\lambda$ and $\rho_\lambda$ are density matrices over $\mathcal{H}_A$.
As Alice can perform topographically complete measurements on $\sigma^b_y$, in general, quantum steering is witnessed by so-called ``steering functional", a map from the assemblage $\{\sigma^b_y\}$ to a real number \footnote{A ``steering functional'' is defined by a set of Hermitian matrices $F^y_b$ that maps the assemblage to a real number by a functional of the form $\sum_{b,y}\Tr(F^b_y\sigma^b_y)$.}.

Instead of checking the steerability of the assemblage, quantum steering can be equivalently witnessed similar to a Bell scenario where trusted Alice and untrusted Bob performs the measurements $A_x=\{M^a_x\}$ and $B_y=\{N_y^b\}$ respectively and obtain the joint probability distribution $\vec{p}=\{p(a,b|x,y)\}$ where $a,b,x,y=0,1,2$. Here $a,x$ denotes the output and input of Alice respectively. The probabilities can be computed in quantum theory as,
\begin{eqnarray}
p(a,b|x,y)=\Tr\left[(M_x^a\otimes N_y^b)\rho_{AB}\right]=\Tr\left(M_x^a\sigma_b^y\right)
\end{eqnarray} 
To witness quantum steering, a steering inequality $\mathcal{B}$ can now be constructed from $\vec{p}$ as
\begin{eqnarray}
\mathcal{B}(\vec{p})=\sum_{a,b,x,y}c_{a,b|x,y}p(a,b|x,y) \leq\beta_L
\end{eqnarray}
where $c_{a,b|x,y}$ are real coefficients and $\beta_L$ denotes the maximum value attainable using assemblages admitting an LHS model \eqref{LHS}. The probabilities one obtains from such assemblages are expressed using \eqref{LHS} as
\begin{eqnarray}\label{LHS2}
p(a,b|x,y)=\sum_{\lambda} \ p(\lambda)p(a|x,\rho_\lambda)p(a|x,\lambda).
\end{eqnarray}
The above representation \eqref{LHS2} will be particularly useful to find the LHS bound of the steering inequality proposed in this work. In the DI framework, the above-presented scenario is also referred to as the 1SDI scenario [see Fig-\ref{fig1}].

{\it{Self-testing.}} Inspired by \cite{sarkar6}, we now define self-testing in the 1SDI scenario. 
\begin{defn}
Consider the above 1SDI scenario with the preparation device creating a state $\ket{\psi}_{AB}$. Alice and Bob perform measurements on this state and observe the joint probability distribution $\{p(a,b|x,y)\}$. Alice is trusted and her measurements $A_x$ are fixed and Bob's measurements are represented as $B_y=\{N^b_y\}$ are arbitrary. Let us now consider that the distribution $\{p(a,b|x,y)\}$ is generated by an ideal experiment with a state $\ket{\tilde{\psi}}_{AB}$ and Bob's measurements $\tilde{B}_y=\{\tilde{N}^b_y\}$. Then, the state $\ket{\psi}_{AB}$ and measurements $B_y$ are certified from $\{p(a,b|x,y)\}$ if there exists a unitary $U_B:\mathcal{H}_B\to \mathcal{H}_B$ such that 
\begin{equation}
 (\mathbbm{1}_A\otimes U_B)\ket{\psi}_{AB}=\ket{\tilde{\psi}}_{AB},
\end{equation}
and,
\begin{equation}
    U_B\,\Pi_B N^b_{y}\Pi_B\,U_B^{\dagger}=\tilde{N}^b_{y},
\end{equation}
where $\Pi_B$ is the projection onto the support of the local support $\rho_B=\Tr_A\left(\proj{\psi}_{AB}\right)$.
\end{defn}
\noindent Let us now proceed towards the results of this work.

{\it{Results.---}}
We begin by constructing a steering inequality stated using the joint probability distribution $\vec{p}$ as
\begin{eqnarray}\label{Stefun1}
W&=& \sum_{a,b,x=0}^2 p(a,b\ne a|x,y=x)\leq \beta_L.
\end{eqnarray}
Using the fact that $\sum_{a,b}p(a,b|x,y)=1$ for all $x,y$, we can simplify the above steering inequality as
\begin{eqnarray}\label{Stefun}
W&=& 3-\sum_{a,x=0}^2 p(a,a|x,x)\leq \beta_L.
\end{eqnarray}
Alice is trusted and performs the measurements $A_x=\{M_x^a\}_{a=0,1,2},$ where $x=0,1,2$ and the measurement elements are given as $M_x^a=\frac{2}{3}\proj{e_{a,x}}$. Here the vectors $\ket{e_{a,x}}\in \mathbbm{C}^2$ are given by 
\begin{eqnarray}\label{Aideamea}
&\ket{e_{0,0}}&=\ket{0},\quad\ \ \ \ \ \ \ \ \ \ \ \ \ \ \ \ \ \ \ \ \ \ \  \ket{e_{1,0}}=\frac{1}{2}\ket{0}+\frac{\sqrt{3}}{2}\ket{1},\nonumber\\&\ket{e_{2,0}}&=\frac{1}{2}\ket{0}-\frac{\sqrt{3}}{2}\ket{1},\quad\ \ 
\ket{e_{0,1}}=\ket{1},\nonumber\\ &\ket{e_{1,1}}&=\frac{\sqrt{3}}{2}\ket{0}+\frac{\im}{2}\ket{1},\ \ \quad \ket{e_{2,1}}=\frac{\sqrt{3}}{2}\ket{0}-\frac{\im}{2}\ket{1},\nonumber\\
&\ket{e_{0,2}}&=\frac{1}{\sqrt{2}}\left(\ket{0}+\im\ket{1}\right),\quad \ket{e_{1,2}}=\frac{1}{\sqrt{2}}\left(\ket{0}+e^{\frac{7\pi\im}{6}}\ket{1}\right),\nonumber\\ &\ket{e_{2,2}}&=\frac{1}{\sqrt{2}}\left(\ket{0}+e^{\frac{-\pi\im}{6}}\ket{1}\right).
\end{eqnarray}
Notice that Alice's measurement is extremal.

Let us now compute the LHS bound $\beta_L$ of the steering inequality \eqref{Stefun1}.
Using \eqref{LHS2}, we rewrite the steering functional $W$ from Eq. \eqref{Stefun} as
\begin{eqnarray}\label{steeopprob}
W = 3- \sum^{2}_{a,x=0}\sum_{\lambda} \ p(\lambda)p(a|x,\rho_\lambda)p(a|x,\lambda)
\end{eqnarray}
%
%
%
We focus on the last term in Eq. (\ref{steeopprob}) for each $x$ and notice that 
they can be bounded from below in the following way
\begin{eqnarray}
    \sum_x\sum_{a=0}^{2}\sum_{\lambda}&p(\lambda)&p(a|x,\rho_{\lambda})p(a|x,\lambda)\nonumber\\&\geq& \sum_x\sum_{\lambda}p(\lambda)\min_{a}\{p(a|x,\rho_{\lambda})\},
\end{eqnarray}
where $x=0,1,2$ and we used the fact that $\sum_ap(a|x,\lambda)=1$ for any $x$ and $\lambda$. Now, minimising over $\rho_{\lambda}$, we obtain
\begin{eqnarray}
\sum_x\sum_{\lambda}p(\lambda)\min_{a}\{p(a|x,\rho_{\lambda})\}\geq\ \ \ \ \ \ \ \qquad \nonumber\\
\sum_{\lambda}p(\lambda)\min_{\rho_\lambda}\sum_x\min_{a}\{p(a|x,\rho_\lambda)\}.
\end{eqnarray}
Using the fact $\sum_{\lambda}p(\lambda)=1$, we get that the LHS bound is upper bounded by
\begin{eqnarray}
\beta_L \leq 3-\min_{\ket\psi\in\mathbbm{C}^2}\sum_{x=0}^2\min_{a}\{p(a|x,\ket{\psi})\}.
\end{eqnarray}
Notice that since the steering functional $W$ is linear, the minimisation can be carried out over pure states. Numerically evaluating the above quantity by putting in Alice's measurements \eqref{Aideamea}, we find that 
\begin{eqnarray}\label{local}
\beta_L\leq 2.673
\end{eqnarray}
For this purpose, we choose a state $\ket{\psi}\in\mathbbm{C}^2$ parameterised using the Bloch representation as
\begin{eqnarray}
\ket{\psi}=\cos{\frac{\theta}{2}}\ket{0}+e^{\mathbbm{i}\phi}\sin{\frac{\theta}{2}}\ket{1}
\end{eqnarray}
where $0\leq\theta\leq\pi/2$ and $0\leq\phi\leq\pi$. Now, the probabilities $p(a|x,\ket{\psi})$ when Alice performs the measurement $M^a_x$ are given by $p(a|x,\ket{\psi})=|\bra{e_{a,x}}\psi\rangle|^2$ which are a function of $\theta,\phi$. Now, a simple optimisation over the parameters $\theta,\phi$ gives us the local bound \eqref{local}.

Let us now evaluate the quantum bound $\beta_Q$, the maximal value achievable using quantum states and measurements, of the steering functional $W$ \eqref{Stefun1}. The quantum bound is in fact the same as the algebraic bound of $W$ \eqref{Stefun1}, that is, $3$ and for instance can be achieved by $\ket{\psi}_{AB}=\ket{\phi^+}_{AB}$ and Bob's measurements $B_x=\{N^a_x\}_{a=0,1,2},$ such that $x=0,1,2$. Here, the measurement elements $N^a_x=\frac{2}{3}\proj{f_{a,x}}$ such that $\ket{f_{a,x}}=\ket{e_{a,x}^{*\perp}}\in\mathbbm{C}^2$, that is, $\langle e_{a,x}^{*}\ket{f_{a,x}}=0$ where $\ket{e_{a,x}}$ are specified in Eq. \eqref{Aideamea} and $*$ denotes their conjugate.

It is important to note here that to achieve the maximal violation of the steering inequality \eqref{Stefun1}, all the probabilities $p(a,a|x,x)$ in \eqref{Stefun} has to be $0$, that is, 
\begin{eqnarray}\label{SOS}
p(a,a|x,x)=0\quad a,x=0,1,2,
\end{eqnarray}
along with the condition that $\sum_{a,b}p(a,b|x,y)=1$ for all $x,y$.
This simple relation \eqref{SOS} is in fact sufficient to self-test the unknown state and measurements that result in the maximal violation of the steering inequality \eqref{Stefun1}. Let us now proceed to the main result of this work.  

\begin{thm}\label{Theo1M} 
Consider that the steering inequality \eqref{Stefun} is maximally violated by a state $\ket{\psi}_{AB}\in\mathbbm{C}^2\otimes\mathcal{H}_B$ and three-outcome measurements $B_y=\{N^b_y\} \ (y=0,1,2)$ acting on $\mathcal{H}_B$. Alice is trusted and her measurements $A_x$ are given in \eqref{Aideamea}. Then, 
there exist a local unitary transformation on Bob's side, $U_B$
such that
\begin{eqnarray}\label{Theo1.2M}
(\mathbbm{1}_A\otimes U_B)\ket{\psi}_{AB}=
\ket{\phi^+}_{AB},
\end{eqnarray}
and,
\begin{eqnarray}\label{Theo1.1M}
U_B\,\Pi_BN^b_y\Pi_B\,U_B^{\dagger}=\frac{2}{3}\proj{e^{*\perp}_{b,y}},
\end{eqnarray}
where $\ket{e_{b,y}}$ are given in \eqref{Aideamea} and $\Pi_B$ is the projector onto the support of $\rho_B=\Tr_A\left(\proj{\psi}_{AB}\right)$.
\end{thm}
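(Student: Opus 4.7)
The plan is to turn the algebraic saturation $W=3$ into nine zero probabilities, read them off on the assemblage, and then exploit the tomographic structure of Alice's measurements. Since $W$ is a sum of nine non-negative probabilities with algebraic maximum $3$, the hypothesis forces condition~\eqref{SOS}: $p(a,a|x,x)=0$ for every $a,x\in\{0,1,2\}$. Written at the level of the assemblage this reads $\tfrac{2}{3}\bra{e_{a,x}}\sigma_a^x\ket{e_{a,x}}=0$, so positivity of $\sigma_a^x$ on $\mathbb{C}^2$ gives $\sigma_a^x\ket{e_{a,x}}=0$, and hence $\sigma_a^x=\alpha_a^x\proj{e_{a,x}^\perp}$ for some $\alpha_a^x\geq 0$, where $\ket{e_{a,x}^\perp}$ is the unit vector orthogonal to $\ket{e_{a,x}}$.

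The main technical step is to impose $\sum_a\sigma_a^x=\rho_A$ simultaneously for $x=0,1,2$: these three equations give three representations of the same $2\times 2$ density matrix $\rho_A$ as a convex combination of rank-one projectors along the nine directions $\ket{e_{a,x}^\perp}$ fixed by~\eqref{Aideamea}. Matching diagonal entries of the $x=2$ expression against those of $x=0$ and $x=1$ forces $\alpha_0^0=\alpha_0^1=1/3$; matching the (complex) off-diagonals across the three settings then forces $\alpha_1^x=\alpha_2^x$ for each $x$. Combined with $\sum_a\alpha_a^x=\mathrm{Tr}\,\rho_A=1$ this pins down $\alpha_a^x=1/3$ for all $a,x$, so that $\rho_A=\tfrac{1}{2}\I$ and $\sigma_a^x=\tfrac{1}{3}\proj{e_{a,x}^\perp}$. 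I expect this short but genuinely three-input-dependent tomographic bookkeeping to be the main obstacle of the proof: it is what makes the choice of three-outcome extremal POVMs sharp, and without all three settings one cannot exclude a non-trivially mixed $\rho_A$.

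Once $\rho_A$ is maximally mixed and $\ket{\psi}_{AB}$ is pure, a Schmidt decomposition produces an isometry $V\colon\mathbb{C}^2\to\mathcal{H}_B$ with $VV^\dagger=\Pi_B$ and $(\I_A\otimes V)\ket{\phi^+}_{AB}=\ket{\psi}_{AB}$. Extending $V^\dagger$ to a unitary $U_B$ on $\mathcal{H}_B$ immediately yields~\eqref{Theo1.2M}. For the measurements, the standard identity $\mathrm{Tr}_B[(\I\otimes M)\proj{\phi^+}]=\tfrac{1}{2}M^T$ combined with $\ket{\psi}_{AB}=(\I_A\otimes V)\ket{\phi^+}_{AB}$ gives $\sigma_a^x=\tfrac{1}{2}(V^\dagger N_x^a V)^T$; equating with $\tfrac{1}{3}\proj{e_{a,x}^\perp}$ and inverting the transpose, using $(\proj{v})^T=\proj{v^*}$ in the computational basis together with the fact that complex conjugation commutes with taking the orthogonal complement, delivers $V^\dagger N_x^a V=\tfrac{2}{3}\proj{e_{a,x}^{*\perp}}$. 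Sandwiching by $U_B$ and recognising $VV^\dagger=\Pi_B$ then produces exactly~\eqref{Theo1.1M}.
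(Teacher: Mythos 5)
Your proposal is correct, and it reaches the same conclusion through what is essentially the mirror image of the paper's argument. The paper works on the untrusted side: it writes $\ket{\psi}_{AB}=(\I_A\otimes P_B)\ket{\phi^+}$ via the Schmidt decomposition, uses $p(a,a|x,x)=0$ together with the identity $R\otimes Q\ket{\phi^+}=\I\otimes QR^T\ket{\phi^+}$ to show that each $P_B\tilde{N}^a_xP_B$ is rank one along $\ket{e_{a,x}^{*\perp}}$, and only then invokes $\sum_a\tilde{N}^a_x=\I$ for all three settings to force $P_B^2\propto\I$. You instead work on the trusted side with the assemblage: positivity of $\sigma_a^x$ on $\mathbbm{C}^2$ turns $\bra{e_{a,x}}\sigma_a^x\ket{e_{a,x}}=0$ directly into $\sigma_a^x=\alpha_a^x\proj{e_{a,x}^{\perp}}$ (a slightly cleaner route than the paper's eigendecomposition of $P_B\tilde{N}^a_xP_B$), and the no-signalling condition $\sum_a\sigma_a^x=\rho_A$ for $x=0,1,2$ plays exactly the role of the paper's Eq.~\eqref{33} — indeed your system is the transpose of theirs, with $\rho_A=\tfrac{1}{2}(P_B^2)^T$ and $\alpha_a^x=\beta_{a,x}/2$, so the decisive linear-algebra step (three decompositions of one fixed $2\times2$ operator over the nine projectors forcing uniform weights) is identical. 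What your ordering buys is that the state is certified first ($\rho_A=\I/2$ gives equal Schmidt coefficients immediately) and the measurements then drop out of the transpose identity $\Tr_B[(\I\otimes M)\proj{\phi^+}]=\tfrac{1}{2}M^T$; the cost is that you must be a little careful that the extension of $V^\dagger$ to a unitary $U_B$ and the identification $VV^\dagger=\Pi_B$ reproduce exactly the projected-measurement statement~\eqref{Theo1.1M}, which your last step does correctly. For $x=2$ you should note that vanishing of the off-diagonal of $\sum_a\alpha_a^2\proj{e_{a,2}^\perp}$ gives two real conditions (forcing $\alpha_0^2=\alpha_1^2=\alpha_2^2$, not just $\alpha_1^2=\alpha_2^2$), but this is the same explicit computation the paper also leaves implicit.
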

\begin{proof}
We begin by considering a state $\rho_{AB}$ that maximally violates the steering inequality \eqref{Stefun1}. However, Bob's dimension is unrestricted and thus we can always purify this state by adding an auxiliary system to Bob. Thus without loss of generality, we consider that the state that results in the maximal violation is given by  $\ket{\psi}_{AB}$. 

Now as dimension of Alice's Hilbert space is $2$, as suggested in \cite{sarkar6} let us consider the Schmidt decomposition of the state $\ket{\psi}_{AB}$ as,
\begin{equation}\label{Schmidt}
    \ket{\psi}_{AB}=\sum_{i=0,1}\lambda_i\ket{s_i}_A\ket{t_i}_B,
\end{equation}
where the coefficients $\lambda_i> 0$ and satisfy the condition $\sum_{i}\lambda_i^2=1$.
The local vectors $\ket{s_i}\in\mathbbm{C}^2$ and $\ket{t_i}\in\mathcal{H}_B$ 
are orthonormal. Notice that the coefficients $\lambda_i\ne 0$ as any violation of the steering inequality \eqref{Stefun1} imposes that the state $\ket{\psi}_{AB}$ is entangled.

Let us now observe that there exists a unitary $U_B$ such that 
$U_B\ket{t_i}=\ket{s_i^*}$ for every $i$. Thus, the state \eqref{Schmidt} can be expressed as
\begin{eqnarray}\label{state2}
(\I_A\otimes U_B)\ket{\psi}_{AB}=(\I_A\otimes P_B)
\frac{1}{\sqrt{2}}\sum_{i=0,1}\ket{s_i}\ket{s_i^*},
\end{eqnarray}
where 
\begin{eqnarray}\label{PB}
P_B=\sqrt{2}\sum_{i=0,1}\lambda_i\proj{s_i^*}.
\end{eqnarray}
Notice that $P_B$ is full-rank as $\lambda_i\ne 0$ in the state \eqref{Schmidt}. The state on the right hand of Eq. \eqref{state2} is the two-qubit maximally entangled state. Thus,
\begin{equation}\label{PW}
(\I_A\otimes U_B)\ket{\psi}_{AB}=\ket{\Tilde{\psi}}_{AB}=(\I_A\otimes P_B)\ket{\phi^+}_{AB}.
\end{equation}

Let us now consider that Bob's measurements are POVM's given by $B_y=\{N^b_y\}$ such that $b,y=0,1,2$. We can characterise these measurements only on the support of Bob's reduced state $\rho_B$. Thus, we project these measurements onto the support of  $\rho_B$, to get
\begin{eqnarray}
\Pi_B N^b_{y}\Pi_B=\overline{N}^b_{y}
\end{eqnarray}
where $\Pi_B=\proj{t_0}+\proj{t_1}$ such that $\ket{t_i}$ are specified in Eq. \eqref{Schmidt}. Now, as shown in \cite{math}, a product of two positive semi-definite matrices is also positive semi-definite. Thus, $\overline{N}^b_{y}$ is also positive semi-definite as $\Pi_B, N^b_{y}$ are both hermitian and positive semi-definite matrices. The condition $\sum_{a,b}p(a,b|x,y)=\sum_{b}p(b|y)=1$, imposes that $\sum_b\overline{N}^b_y=\I_B$ for all $y$. Applying the unitary $U_B$, we arrive at
\begin{eqnarray}\label{meas}
U_B\overline{N}^b_{y}U_B^{\dagger}=\tilde{N}^b_{y}\qquad \forall b,y.
\end{eqnarray}

Notice from the above formula \eqref{meas} that $\tilde{N}^b_{y}$ acts on the Hilbert space $\mathbbm{C}^2$. 
Now, evaluating the joint probability $p(a,a|x,x)$ using the state \eqref{PW} and the measurements \eqref{meas}, we obtain
\begin{equation}
p(a,a|x,x)=\bra{\phi^+}(\I_A\otimes P_B)[M^a_x\otimes \tilde{N}^a_{x}](\I_A\otimes P_B)\ket{\phi^+},
\end{equation}
where $M^a_x$ denote Alice's measurement elements and are given in Eq. \eqref{Aideamea}. Now using the condition \eqref{SOS}, we arrive at 
\begin{eqnarray}\label{26}
p(a,a|x,x)=\bra{\phi^+}M^a_{x}\otimes P_B\tilde{N}^a_{x}P_B\ket{\phi^+}=0.
\end{eqnarray}
Using the fact that $R\otimes Q\ket{\phi^+}=\I\otimes QR^T\ket{\phi^+}$, where $R^T$ denotes the transpose of $R$ in the standard basis, we get from Eq. \eqref{26} that
\begin{eqnarray}\label{27}
\Tr\left[P_B\tilde{N}^a_{x}P_B.M^{aT}_{x}\right]=0.
\end{eqnarray}
Now, notice from \eqref{PB} that $P_B$ and $\tilde{N}^a_{x}$ are positive semi-definite. Thus $P_B\tilde{N}^a_{x}P_B$ is also positive semi-definite \cite{math}. 
Now, we take the eigendecomposition of $P_B\tilde{N}^a_{x}P_B$ as
\begin{eqnarray}\label{genPOVM}
P_B\tilde{N}^a_{x}P_B=\sum_{i=0,1}\alpha_{i,a,x}\proj{k_{i,a,x}}
\end{eqnarray}
such that $\alpha_{i,a,x}\geq0$. Expanding $M^a_x$ using \eqref{Aideamea}, we obtain from Eq. \eqref{27} that
\begin{eqnarray}\label{30}
\sum_{i=0,1}\alpha_{i,a,x}\left|\langle e_{a,x}^*\ket{k_{i,a,x}}\right|^2=0,
\end{eqnarray}
where we used the fact that for any projector $\Pi^T=\Pi^*$.
As $\tilde{N}^a_{x}$ acts on $\mathbbm{C}^2$ along with the fact that $\ket{e_{a,x}}\in \mathbbm{C}^2$ for any $a,x$, we expand $\ket{k_{i,a,x}}$ in the basis $\{\ket{e_{a,x}^*},\ket{e_{a,x}^{*\perp}}\}$ to obtain from \eqref{30} that
\begin{eqnarray}\label{31}
P_B\tilde{N}^a_{x}P_B=\beta_{a,x}\proj{e_{a,x}^{*\perp}},
\end{eqnarray}
where $\beta_{a,x}> 0$.
Then, using the fact that $\sum_a\tilde{N}^a_{x}=\I$ for any $x$ we get 
\begin{eqnarray}\label{33}
P_B^2=\sum_a\beta_{a,x}\proj{e_{a,x}^{*\perp}}\quad \forall x.
\end{eqnarray}
Thus, $\beta_{a,x}$ must satisfy the following condition
\begin{eqnarray}
\sum_a\beta_{a,x}\proj{e_{a,x}^{*\perp}}=\sum_a\beta_{a,x'}\proj{e_{a,x'}^{*\perp}}
\end{eqnarray}
for any $x,x'=0,1,2$.
Solving the above conditions by putting in the explicit form of $\ket{e_{a,x}}$ \eqref{Aideamea}, we get that $\beta_{a,x}=\beta_{a',x'}$ for any $a,x,a',x'$. Thus, from Eq. \eqref{33} we arrive at
\begin{eqnarray}
P_B^2=\frac{3\beta_{0,0}}{2}\I_B,
\end{eqnarray}
where we used the fact that $\sum_a\proj{e_{a,x}^{*\perp}}=3/2\ \I$ for any $x$.
This implies from \eqref{PW} that
\begin{eqnarray}
\ket{\Tilde{\psi}}_{AB}=\sqrt{\frac{3\beta_{0,0}}{2}}\ket{\phi^+}_{AB}.
\end{eqnarray}
Normalising the above state, we get that $\beta_{0,0}=2/3$. Thus, we have that the state upto some local unitary $U_B$ is the two-qubit maximally entangled state while the measurements from Eq. \eqref{31} is
\begin{eqnarray}
\tilde{N}^a_{x}=\frac{2}{3}\proj{e_{a,x}^{*\perp}}.
\end{eqnarray}
This completes the proof.
\end{proof}

{\it{Robustness against white noise.}} From an experimental perspective, it is important to find the robustness of our certification scheme against noise that might be present in the sources or the detectors. However, to perform an experiment it is not always necessary to find the full robustness, which captures the fidelity between the real and ideal state with respect to the violation of the steering inequality.  

Here, inspired by practical experiments, we find the robustness of our scheme with respect to a specific noise model, that is, when the ideal states and measurements, that result in the maximal violation of the steering inequality \eqref{Stefun1}, are mixed with white noise. For this purpose, let us consider ideal Bob's POVM's  $B_x=\{\frac{2}{3}\ket{e_{a,x}^{*\perp}}\!\bra{e_{a,x}^{*\perp}}\}_{a=0,1,2}$ where ${x=0,1,2},$ and $\ket{e_{a,x}}$ given in Eq. \eqref{Aideamea}. Adding white noise to every measurement element and defining the new measurement as $\overline{B}_x=\{N^a_x\}_{a=0,1,2}$ such that 
\begin{eqnarray}\label{robustBobmea}
N^a_x=\frac{2}{3}\left((1-\varepsilon_{a,x})\proj{e_{a,x}^{*\perp}}+\frac{\varepsilon_{a,x}}{2}\I\right).
\end{eqnarray}
Similarly, adding white noise to the maximally entangled state, we obtain that the noisy state shared between Alice and Bob is
\begin{eqnarray}\label{robuststate}
\rho_{AB}=(1-2\varepsilon_{s})\ket{\phi^+}\!\bra{\phi^+}_{AB}+\frac{\varepsilon_s}{2}\I.
\end{eqnarray}
It is worth noting here that the term $1-\varepsilon_i$ for any index $i$ is usually referred to as the visibility parameter.
Notice that the measurement elements and state being positive semi-definite, imposes that noise parameters $\varepsilon_{a,x}\geq0$ for any $a,x$ along with $\varepsilon_s\geq0$. Let us denote $\varepsilon=\max\{\max_{a,x}\{\varepsilon_{a,x}\},\varepsilon_s\}$. Without loss of generality, we can replace all the noise parameters $\varepsilon_{a,x},\varepsilon_s$ in Eqs. \eqref{robustBobmea} and \eqref{robuststate} with $\varepsilon$. 

Let us now evaluate the steering functional \eqref{Stefun}, with Alice being trusted and her measurements are given in \eqref{Aideamea}, using the noisy state \eqref{robuststate} and noisy Bob's measurements \eqref{robustBobmea}. For this purpose, let us first compute $p(0,0|0,0)$ as
\begin{eqnarray}
p(0,0|0,0)=\frac{2}{3}\Tr\left(\proj{e_{0,0}}\otimes N^0_{0}\rho_{AB}\right).
\end{eqnarray}
Substituting $\ket{e_{0,0}}$ from \eqref{Aideamea}, $N^0_{0}$ from \eqref{robustBobmea} and $\rho_{AB}$ from \eqref{robuststate}, we obtain that
\begin{eqnarray}
p(0,0|0,0)=\frac{\varepsilon}{9}(3-2\varepsilon).
\end{eqnarray}
Proceeding in a similar manner, we obtain for any $a,x=0,1,2$ that 
\begin{eqnarray}
p(a,a|x,x)=\frac{\varepsilon}{9}(3-2\varepsilon).
\end{eqnarray}
Thus, the value of the steering functional \eqref{Stefun}  when the ideal states and measurements are mixed with white noise is given by
\begin{eqnarray}
W=3+2\varepsilon^2-3\varepsilon\geq3(1-\varepsilon).
\end{eqnarray}
Thus, the proposed self-testing scheme is highly robust against white noise as the value of steering functional changes linearly with respect to the noise parameter $\varepsilon$. 

Let us also analyse the robustness of our protocol when the state shared between Alice and Bob has a noise model of the form
\begin{eqnarray}\label{robuststate1}
\rho_{AB}=(1-2\varepsilon_{s})\ket{\phi^{+,\delta}}\!\bra{\phi^{+,\delta}}_{AB}+\frac{\varepsilon_s}{2}\I
\end{eqnarray}
where
\begin{eqnarray}
   \ket{\phi^{+,\delta}}=\frac{1}{\sqrt{2(1+(1-\delta)^2)}} \left(\ket{00}+(1-\delta)\ket{11}\right).
\end{eqnarray}
Evaluating the steering functional $W$ \eqref{Stefun} using the above state \eqref{robuststate1} and noisy Bob's measurements \eqref{robustBobmea}, we get that
\begin{eqnarray}
    W=3-f(\delta,\varepsilon)
\end{eqnarray}
such that
\begin{equation}
  \small{  f(\delta,\varepsilon)=\frac{3 \sqrt{2} \delta (3 - 2 \varepsilon) \varepsilon + 
 3 \varepsilon (-3 + 2 \varepsilon) + \delta^2 (-2 + \varepsilon) (1 + 
    2 \varepsilon))}{-3 + 3 (\sqrt{2} - \delta) \delta}.}
\end{equation}
Thus, even when the state consists of noise along with an imbalance in the coefficient of the maximally entangled state, our scheme is highly robust as $f(\delta,\varepsilon)\backsim O(\delta,\epsilon)$ when $\varepsilon,|\delta|<<1$.



\textit{Conclusions---} There are a few certification schemes in the prepare-and-measure scenarios, for instance \cite{Armin2, Marcin}, that utilise non-projective measurements. However, none of these schemes can certify entangled states. In this work utilising the quantum steering scenario,
we propose a scheme for certification of the two-qubit maximally entangled state using non-projective measurements. Along with it, we also certified three three-outcome extremal POVM's on the untrusted Bob's side. We then show that our scheme is highly robust against the presence of white noise in the experimental devices. It is worth noting here that the certification of states using non-projective measurements can not be implemented in the standard Bell scenario. Thus, we identify a task that can be done using quantum steering but not using Bell nonlocality. This work also suggests that quantum steering might be useful towards designing highly noise-tolerant self-testing schemes, that is, quantum states might be certifiable using noisy projective measurements in the quantum steering scenario.

Some follow-up questions arise from our work. Firstly, it will be interesting to find 1SDI certification of any pure two-qubit entangled state using only non-projective measurements. A challenging problem in this regard would be to find a 1SDI scheme that can certify states of arbitrary dimension using only POVM's. In this work, we utilised extremal measurements, however, it will be interesting if one can find similar certification schemes using non-extremal measurements. 

\textit{Acknowledgement---} We would like to thank Remigiusz Augusiak for his valuable comments. This work is supported by Foundation for Polish Science through the First Team project (no First TEAM/2017-4/31).
\providecommand{\noopsort}[1]{}\providecommand{\singleletter}[1]{#1}%

\end{document}